\definecolor{darkblue}{rgb}{0.02, 0.17, 0.40}
\theoremstyle{plain}
\newtheorem{lemma}{Lemma}
\title{Encoding 3SUM}
\author[1]{Sergio Cabello}
\author[2]{Jean Cardinal}
\author[2,3]{John Iacono}
\author[2]{\\Stefan Langerman}
\author[4]{Pat Morin}
\author[2]{Aur\'{e}lien Ooms}
\affil[1]{University of Ljubljana}
\affil[2]{Universit\'e libre de Bruxelles}
\affil[3]{New York University}
\affil[4]{Carleton University}
\date{}
\begin{document}

\maketitle

\begin{abstract}
We consider the following problem: given three sets of real numbers, output
a word-RAM data structure from which we can efficiently recover the sign of the sum of any triple of numbers,
one in each set.
This is similar to a previous work by some of
the authors to encode the order type of a finite set of points. While this previous
work showed that it was possible to achieve slightly subquadratic space and
logarithmic query time, we show here that for the simpler 3SUM problem, one
can achieve an encoding that takes \(\tilde{O}(N^{\frac 32})\) space for inputs sets
of size \(N\) and allows constant time queries in the word-RAM.
\end{abstract}
\section{The Problem}

Given three sets of \(N\) real numbers
\(A = \{\, a_1 < a_2 < \cdots < a_N\,\} \),
\(B = \{\, b_1 < b_2 < \cdots < b_N\,\} \),
and \(C = \{\, c_1 < c_2 < \cdots < c_N\,\}\),
we wish to build a discrete data structure (using bits, words, and pointers) such that,
given any triple \((i,j,k) \in {[N]}^3\) it is possible to compute the sign of
\(a_i + b_j + c_k\) by only inspecting the data structure (we cannot consult
\(A\), \(B\), or \(C\)).
We refer to the map $\chi : {[N]}^3\to \{-,0,+\}, (i,j,k)\mapsto\mathrm{sgn}
(a_i+b_i+c_k)$ as the {\em 3SUM-type} of the instance $\langle A,B,C \rangle$.
Obviously, one can simply construct a lookup table of size \(O(N^3)\), such
that triple queries can be answered in \(O(1)\) time. We aim at improving on
this trivial solution.

\section{Motivation}

In the 3SUM problem, we are given an array of numbers as input and are asked
whether any three of them sum to 0. In the mid-nineties, this problem was
identified as a bottleneck of many
important problems in geometry, such as detection of affine degeneracies or
motion planning~\cite{GO95}. Since then, it has become a central problem in
fine-grained complexity theory~\cite{PW10}. It has long been conjectured to
require $\Omega (N^2)$ time. In 2014, it was shown to be solvable in $o(N^2)$
time, but no algorithm with running time $O(N^{2-\delta})$ with constant
$\delta>0$ is known~\cite{GP18}.

Lower bounds exist in restricted models of computation. Most notably,
$\Omega(N^2)$ 3-linear queries are needed to solve 3SUM~\cite{Er99},
and nontrivial lower bounds have also been proven for slightly more powerful linear
decision trees~\cite{AC05}. However, in a recent breakthrough contribution, Kane, Lovett,
and Moran showed that 3SUM could be solved using $O(N\log^2 N)$
6-linear queries~\cite{KLM18}, hence within a $O(\log N)$ factor of the
information-theoretic lower bound.

Linear decision trees are examples of {\em nonuniform algorithms}, in which we
are allowed to have different algorithms for different input sizes.
Algebraic decision trees generalize linear decision trees
by allowing decision based on the sign of constant-degree polynomials at each
node~\cite{SY82}.

Any decision tree identifying the 3SUM-type of a 3SUM instance yields a concise
encoding of this 3SUM-type:
just write down the outcome of the successive tests. Knowing the decision tree
by convention, this sequence of bits is
sufficient to recover the sign of any triple.

The question we consider here is how to make such a representation efficient,
in the sense that not only does it use merely a few bits, but the answer to any
triple query can be recovered efficiently. Understanding the interplay between
nonuniform algorithms and such data structures hopefully sheds light on the
intrinsic structure of the problem.

\section{Results}

See table~\ref{tor} for a summary. As there are only $O(N^3)$ queries, a table
of size $(\log_2 3) N^3 + O(1)$ bits suffices to give constant query time
\cite{DPT10}. This can be improved to $O(N^2\log N)$ bits of space by
storing for each pair $(i,j)$ the values
\(k_<(i,j) = \max \{ 0\}\cup \{k \colon\, a_i + b_j + c_k < 0\}\) and
\(k_>(i,j) = \min \{ N+1\}\cup \{k \colon\, a_i + b_j + c_k > 0\}\).
For a query \((i,j,k)\), we compare \(k\) against the values \(k_<(i,j)\) and \(k_>(i,j)\)
to recover \(\chi(i,j,k)\) in \(O(1)\) time. All \(k_<(i,j)\) and \(k_>(i,j)\)
can be computed in \(O(N^2)\) time via the classic quadratic time algorithm for
3SUM.

One seemingly simple representation is to store the numbers in $A$, $B$ and
$C$; however these are reals and thus we need to make them representable using
a finite number of bits.
In Section~\ref{s:numbers} we show that a minimal integer representation of a
3SUM instance may require $\Theta(N)$ bits per value, which would give
rise to a $O(N)$ query time and $O(N^2)$ space, which is far from
impressive.
In \cite{CCILO18} the problem of given a set of $N$ lines, to create an
encoding of them so that the orientation of any triple (the \emph{order type})
can be determined was studied; our problem is a special case of this where the
lines only have three slopes.
Can we do better for the case of 3SUM? We answer this in the affirmative.
In Section~\ref{s:space} we show how to use an optimal $O(N \log N)$ bits of
space with a polynomial query time. Finally, in section~\ref{s:sscqt} we show
how to use $\tilde{O}(N^{1.5})$ space to achieve $O(1)$-time queries.

\begin{table}
\centering
\caption{Table of results}\label{tor}
\begin{tabular}{cccc}
& Query time & Space (in bits) & Preprocessing time \\ \hline
Trivial & $O(1)$ & $O(N^3)$ & $O(N^3)$ \\
Almost trivial & $O(1)$ & $O(N^2 \log N)$ & $O(N^2)$ \\
Order-type encoding \cite{CCILO18} & $O(\log N)$ & $O(\frac{N^2 \log^2 \log N}{\log N})$ & $O(N^2) $\\
Order-type encoding \cite{CCILO18} & $O(\frac{\log N}{\log \log N})$ & $O(\frac{N^2 }{\log^{1-\epsilon} N})$ & $O(N^2)$ \\
Numeric representation (\S\ref{s:numbers}) & $O(N)$ & $O(N^2)$ & $N^{O(1)}$\\
Space-optimal representation (\S\ref{s:space}) & $N^{O(1)}$ & $O(N \log N)$ & $N^{O(1)}$\\
Query-optimal (\S\ref{s:sscqt}) &  $O(1)$ & $\tilde{O}(N^{1.5})$ & $O(N^{2})$ \\
\end{tabular}
\end{table}

\section{Representation by numbers} \label{s:numbers}

A first natural idea is to encode the real 3SUM instance by \emph{rounding} its numbers to integers.
We show a tight bound of $\Theta (N^2)$ bits for this representation.

\begin{lemma}
\label{lem:bitsize}
Every 3SUM instance has an equivalent integer instance
where all values have absolute value at most $2^{O(N)}$. Furthermore, there
exists an instance of 3SUM where all equivalent integer instances
require numbers at least as large as the $N$th Fibonacci number and where the
standard binary representation of the instance requires $\Omega(N^2)$ bits.
\end{lemma}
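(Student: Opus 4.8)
The plan is to prove the two halves separately: a general upper bound by a linear‑programming / arrangement argument, and one explicit hard instance for the lower bound.

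\emph{Upper bound.} Fix a 3SUM‑type $\chi$ that is realized by some real instance, and regard the set $P_\chi$ of all real instances with that type as a subset of $\mathbb{R}^{3N}$ with coordinates $a_1,\dots,a_N,b_1,\dots,b_N,c_1,\dots,c_N$. It is cut out by the ordering constraints $a_i-a_{i+1}<0$, $b_i-b_{i+1}<0$, $c_i-c_{i+1}<0$ together with, for every $(i,j,k)$, the (in)equality $a_i+b_j+c_k\ \{<,=,>\}\ 0$ dictated by $\chi(i,j,k)$. All of these are homogeneous (right‑hand side $0$), their coefficient vectors lie in $\{-1,0,1\}^{3N}$, and each has at most three nonzero entries, so $P_\chi$ is a nonempty polyhedral cone (minus part of its boundary). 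I would first scale the witnessing real instance so that every strict inequality $\ell(x)<0$ becomes $\ell(x)\le -1$, landing in a rational polyhedron $Q$, and then take a basic point of a minimal face of $Q$: its coordinates are quotients of $(3N)\times(3N)$ subdeterminants of the augmented matrix $[M\mid b]$, whose rows have at most four nonzero entries of absolute value at most $1$, so by Hadamard's inequality applied row‑wise these subdeterminants are at most $2^{3N}=2^{O(N)}$. Clearing the common denominator keeps us inside the cone $P_\chi$ and yields an integer instance with the same 3SUM‑type and all values bounded by $2^{O(N)}$. The only subtlety here is that one must use sparsity — at most three nonzeros per constraint — to get $2^{O(N)}$ rather than the $2^{O(N\log N)}$ that the crude Hadamard bound $d^{d/2}$ would give.

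\emph{Lower bound, construction.} Let $F_1=F_2=1$, $F_{n+1}=F_n+F_{n-1}$, and put $m=3N/2$ (take $N$ even). I would use the integer instance that distributes the signed Fibonacci numbers cyclically by residue mod $3$: $A=\{\pm F_p:p\le m,\ p\equiv 1\}$, $B=\{\pm F_p:p\le m,\ p\equiv 2\}$, $C=\{\pm F_p:p\le m,\ p\equiv 0\}$, each of size $N$. For every $n$ with $2\le n\le m-1$ the indices $n-1,n,n+1$ occupy the three residue classes, so the triples $(+F_{n-1},+F_n,-F_{n+1})$ and $(-F_{n-1},-F_n,+F_{n+1})$ each pick one element from each of $A,B,C$ and sum to $0$ because $F_{n-1}+F_n=F_{n+1}$; hence $\chi$ is $0$ on the two corresponding index triples.

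\emph{Lower bound, analysis.} Let $\langle A',B',C'\rangle$ be any integer instance with 3SUM‑type $\chi$. For $p\le m$ let $u_p$ (resp.\ $w_p$) be the value that $A'$, $B'$ or $C'$ (selected by $p\bmod 3$) places at the sorted position held in the original instance by $+F_p$ (resp.\ by $-F_p$); the $2m$ positions involved are pairwise distinct. Preservation of $\chi$ forces, for $2\le n\le m-1$,
\[
u_{n-1}+u_n+w_{n+1}=0\qquad\text{and}\qquad w_{n-1}+w_n+u_{n+1}=0,
\]
and subtracting shows that $d_p:=u_p-w_p\in\mathbb{Z}$ obeys the Fibonacci recurrence $d_{n+1}=d_n+d_{n-1}$. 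In the original instance $-F_1,+F_1$ are consecutive in the sorted order of $A$ and $-F_2,+F_2$ are consecutive in $B$ (nothing lies strictly between $-1$ and $1$ in either set), so — $A'$ and $B'$ being strictly increasing integer sequences — $d_1\ge 1$ and $d_2\ge 1$, whence $d_p\ge F_p$ for all $p\le m$. Thus $\max(|u_p|,|w_p|)\ge d_p/2\ge F_p/2$, so some value of the integer instance is at least $F_m/2\ge F_N$ (for $N$ large enough), and moreover for each $p$ at least one of the two distinct positions ``$+F_p$''/``$-F_p$'' holds an integer of absolute value $\ge F_p/2$, needing $\Omega(p)$ bits; summing over $p=1,\dots,m$ gives $\Omega(m^2)=\Omega(N^2)$ bits for the standard binary representation of any equivalent integer instance. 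The hard part is exactly this step: a single forced zero sum $a_i+b_j+c_k=0$ mixes one element from each set and cannot by itself control a $\pm$‑pair inside one set, so a naive Fibonacci chain would be corrupted by sign ambiguities; using \emph{both} a relation and its negation and passing to the difference $d_p=u_p-w_p$, whose two base values are forced positive purely by the sorted order (no zero sum needed), is what makes the recurrence exact.
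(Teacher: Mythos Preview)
Your upper bound is essentially the paper's argument: both view the 3SUM-type as a cell of a hyperplane arrangement in $\mathbb{R}^{3N}$, tighten the strict inequalities by a unit to obtain a closed rational polyhedron, take a basic solution, and bound the relevant $3N\times 3N$ determinants by exploiting that every row has at most three or four entries from $\{-1,0,1\}$. You invoke Hadamard's row inequality (row norm $\le 2$, hence $|\det|\le 2^{3N}$); the paper instead counts nonzero terms in the Leibniz expansion (at most $3^{3N}$ for $M$ and $4^{3N}$ for $M_i$). Both yield $2^{O(N)}$, and both gloss over the same technicality that the polyhedron has a nontrivial lineality space (e.g.\ the direction $(1^N,-1^N,0^N)$), so strictly speaking one must pin down, say, $a_1=0$ before a basic point exists.

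Your lower bound is a genuinely different construction. The paper's instance uses the equalities $a_i+b_1+c_{N-i+1}=0$ and $a_1+b_i+c_{N-i+1}=0$ to force $A$ and $B$ to be translates of each other and $C$ to be (a translate of) their reversed negation; then the single family of \emph{strict} inequalities $a_{i-1}+b_{i-2}+c_{N-i+1}<0$ becomes $a_i-a_1>(a_{i-1}-a_1)+(a_{i-2}-a_1)$, so the shifted $A$-sequence grows super-Fibonacci. Your instance instead distributes $\pm F_p$ cyclically over $A,B,C$ and uses only \emph{equality} constraints (the zero sums $F_{n-1}+F_n-F_{n+1}=0$ and their negations) to make the gaps $d_p=u_p-w_p$ obey the exact Fibonacci recurrence, with the base cases $d_1,d_2\ge 1$ supplied purely by the integer ordering. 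The paper's route is shorter and lets one array carry all the growth; your route is more elaborate but makes the $\Omega(N^2)$ total-bit count fully explicit (one value of magnitude $\ge F_p/2$ at a distinct position for every $p\le 3N/2$), a step the paper leaves implicit.
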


\begin{proof}
Every 3SUM instance \(A = \{\, a_1 < a_2 < \ldots < a_N\,\} \),
\(B = \{\, b_1 < b_2 < \cdots < b_N\,\} \),
and \(C = \{\, c_1 < c_2 < \cdots < c_N\,\}\)
can be interpreted as the point
\( (a_1,\ldots,a_N,b_1,\ldots,b_N,c_1,\ldots,c_N) \)
in \(\mathbb{R}^{3N}\).
Let us use the variables \(x_1,\ldots,x_N\) to encode the first \(N\) dimensions
of \(\mathbb{R}^{3N}\), \(y_1,\ldots,y_N\) to encode the next \(N\) dimensions,
and \(z_1,\ldots,z_N$ for the remaining dimensions.
Consider the subset of $\mathbb{R}^{3N}$
\[
	\Delta = \{ (x_1,\ldots,x_N,y_1,\ldots,y_N,z_1,\ldots,z_N) \mid
				x_i<x_{i+1}, ~y_j<y_{j+1}, ~ z_k<z_{k+1}~ \forall i,j,k \in [N-1]\}
\]
and the set $\Pi$ of $N^3$ hyperplanes $x_i+y_j+z_k=0$, where $i,j,k\in [N]$.
Let $\mathcal{A}$ be the arrangement defined
by $\Pi$ \emph{inside $\Delta$}. Instances of 3SUM correspond to points in $\Delta$.
Moreoever, two 3SUM instances have the same 3SUM-type if and only if they are
in the same cell of $\mathcal{A}$.

Consider an instance $\langle A,B,C \rangle$ and let $\sigma=\sigma(A,B,C)$
be the cell of $\mathcal{A}$ that contains it.
Then $\sigma$ is the cell defined by the inequalities
\begin{align*}
	\forall{i,j,k}\in [N]&:~~~
	\begin{cases}
		x_i+y_j+z_k > 0 & \text{if $\chi(i,j,k)=+1$,}\\
		x_i+y_j+z_k = 0 & \text{if $\chi(i,j,k)=0$,}\\
		x_i+y_j+z_k < 0 & \text{if $\chi(i,j,k)=-1$.}
	\end{cases}\\
	\forall{i,j,k}\in [N-1]&:~~~
		\begin{cases}
		x_i - x_{i+1}<0,\\
		y_j - y_{j+1}<0,\\
		z_k - z_{k+1}<0.
	\end{cases}
\end{align*}
Let $\sigma'$ be the subset of $\mathbb{R}^{3N}$ defined by the following inequalities:
\begin{align*}
	\forall{i,j,k}\in [N]&:~~~
	\begin{cases}
		x_i+y_j+z_k \geq 1 & \text{if $\chi(i,j,k)=+1$,}\\
		x_i+y_j+z_k = 0 & \text{if $\chi(i,j,k)=0$,}\\
		x_i+y_j+z_k \leq -1 & \text{if $\chi(i,j,k)=-1$.}
	\end{cases}\\
	\forall{i,j,k}\in [N-1]&:~~~
		\begin{cases}
		x_i - x_{i+1} \leq 1,\\
		y_j - y_{j+1} \leq 1,\\
		z_k - z_{k+1} \leq 1.
	\end{cases}
\end{align*}

Clearly $\sigma'$ is contained in $\sigma$. Moreover, for a sufficiently large $\lambda>0$
the scaled instance $\langle \lambda A,\lambda B,\lambda C \rangle$ belongs to $\sigma'$.
Therefore, $\sigma'$ is nonempty.

Since $\sigma'$ is defined by a collection of linear inequalities defining closed halfspaces,
there exists a point $p$ in $\sigma'$ defined by a subset of at most $3N$ inequalities,
where the inequalities are actually equalities. Let us assume for simplicity that
exactly $3N$ equalities define the point $p$. Then, $p=(x,y,z)$ is the solution
to a linear system of equations $M [x~ y ~z]^T=\delta$
where $M$ and $\delta$ have their entries in $\{ -1,0,1 \}$
and each row of $M$ has at most three non-zero entries. The solution $p$ to this
system of equations is an instance equivalent to $\langle \lambda A,\lambda B,\lambda C \rangle$.

Because of Cramer's rule, the system of linear equations has solution with entries
$\det(M_i)/\det(M)$,
where $M_i$ is the matrix obtained by replacing the $i$th column of $M$ by $\delta$.
We use the following simple bound on the determinant.
Since $\det(M)=\sum_{\pi}\mathrm{sgn}(\pi) \prod_i m_{i,\pi(i)}$, where
$\pi$ iterates over the permutations of $[3N]$, there are
at most $3^{3N}$ summands where $\pi$ gives non-zero product $\prod_i m_{i,\pi(i)}$ (we have
to select one non-zero entry per row), and the product is always in $\{ -1,0,1\}$.
Therefore $|\det(M)|\leq 3^{3N}$. Similarly, $|\det(M_i)|\leq 4^{3N}$ because
each row of $M_i$ has at most $4$ non-zero entries.
We conclude that the solution to the system $M [x~ y ~z]^T=\delta$
are rationals that can be expressed with $O(N)$ bits. This solution gives
a 3SUM instance with rationals that is equivalent to $\langle A, B, C \rangle$.
Since all the rationals have the common denominator ($\det(M)$), we can scale the result
by $\det(M)$ and we get an equivalent instance with integers, where
each integer has $O(N)$ bits.

The proof of the second statement is by implementing the Fibonacci recurrence in each of the
arrays $A,B,C$. This can be achieved by letting:
\begin{eqnarray*}
a_i + b_1 + c_{N-i+1} & = & 0, \text{for }i\in [N] \\
a_1 + b_i + c_{N-i+1} & = & 0, \text{for }i\in [N] \\
a_{i-1} + b_{i-2} + c_{N-i+1} & < & 0, \text{for }i\in \{3,4,\ldots ,N\},
\end{eqnarray*}
The first two sets of equations ensure that the two arrays $A$ and $B$ are identical, while
the array $C$ contains the corresponding negated numbers, in reverse order.
From the inequalities in the third group, and depending on the choice of the initial values $a_1, a_2$,
each array contains a sequence growing at least as fast as the Fibonacci sequence.
\end{proof}

Note that this is a much smaller lower bound than for order types of points sets in the plane,
the explicit representation of which can be shown to require exponentially many bits per coordinate~\cite{GPS89}.

\section{Space-optimal representation} \label{s:space}

By considering the arrangement of hyperplanes defining the 3SUM problem, we get an
information-theoretic lower bound on the number of bits in a 3SUM-type.

\begin{lemma}
There are $2^{\Theta(N\log N)}$ distinct 3SUM-types of size $N$.
\end{lemma}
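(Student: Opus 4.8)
The plan is to establish matching upper and lower bounds on the number of distinct 3SUM-types. For the upper bound $2^{O(N\log N)}$, I would invoke the standard fact that an arrangement of $m$ hyperplanes in $\mathbb{R}^d$ has at most $O(m^d)$ cells (faces of all dimensions); here $m = N^3 + O(N)$ hyperplanes (the $x_i+y_j+z_k=0$ planes together with the $O(N)$ ordering constraints cutting out $\Delta$) live in $d = 3N$ dimensions, so the number of cells is $(N^3)^{O(N)} = 2^{O(N\log N)}$. Since distinct 3SUM-types correspond to distinct cells of the arrangement $\mathcal{A}$ from Lemma~\ref{lem:bitsize}, this bounds the number of types. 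Alternatively, and perhaps cleaner to state, one can bound the number of sign patterns of $N^3$ degree-one polynomials in $3N$ variables directly via the Milnor--Thom / Warren-type bound, which again yields $2^{O(N\log N)}$.

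For the lower bound $2^{\Omega(N\log N)}$, the idea is to exhibit that many genuinely different types. The cleanest route is a counting/dimension argument: fix $B$ and $C$ to be generic fixed sequences (say $b_j = j$ and $c_k = k^2$, or any choice making all pairwise sums $b_j + c_k$ distinct), and let $A = \{a_1 < \cdots < a_N\}$ vary. For a query $(i,j,k)$ the sign of $a_i + b_j + c_k$ is the sign of $a_i - (-b_j - c_k)$, so the 3SUM-type is determined by, and essentially determines, the interleaving of the $N$ values $a_1 < \cdots < a_N$ with the $N^2$ fixed thresholds $t_{jk} = -b_j - c_k$. Different relative orders of the $a_i$'s among these thresholds give different types, and the number of such interleavings is the number of ways to place $N$ ordered points into $N^2 + 1$ slots with repetition allowed, which is $\binom{N^2 + N}{N} = 2^{\Omega(N\log N)}$. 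One must check that distinct interleavings really do yield distinct $\chi$, which is immediate since the interleaving position of $a_i$ is recovered from $\chi$ by reading off, over all $(j,k)$, whether $a_i + b_j + c_k$ is negative, zero, or positive.

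The main obstacle is not the counting but making sure the lower-bound construction is \emph{realizable} as a legitimate 3SUM instance and that the claimed interleavings are all achievable. Concretely: (i) the thresholds $t_{jk}$ must be chosen so that enough distinct interleaving patterns of a monotone $N$-tuple actually occur — it suffices that the $t_{jk}$ take many distinct values, e.g. $\Omega(N)$ distinct values among them, after which choosing $a_i$ to fall in prescribed gaps gives $\binom{\Omega(N)+N}{N} = 2^{\Omega(N\log N)}$ patterns; (ii) each prescribed pattern must be attainable by some strictly increasing real tuple $a_1 < \cdots < a_N$, which is clear because any assignment of the $a_i$ to (possibly repeated, weakly increasing) gap-indices can be perturbed into strictly increasing reals. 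I would spell out one explicit choice — for instance $b_j = c_j = j$ so $t_{jk} = -(j+k)$ ranges over $\{-2,\ldots,-2N\}$, giving $2N-1$ distinct thresholds and hence at least $\binom{(2N-1)+N}{N} = 2^{\Omega(N\log N)}$ types — and then remark that combined with the upper bound this pins the count at $2^{\Theta(N\log N)}$. A brief sanity remark that this also reproves the $\Omega(N\log N)$-bit information-theoretic lower bound on any encoding, as used in the following section, would close the loop.
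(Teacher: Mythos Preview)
Your upper bound matches the paper's: both count cells of the arrangement of $N^3$ hyperplanes in $\mathbb{R}^{3N}$, getting $N^{O(N)}=2^{O(N\log N)}$.

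Your lower-bound \emph{strategy}---fix $B,C$, vary $A$, and count interleavings of $a_1<\cdots<a_N$ among the thresholds $t_{jk}=-(b_j+c_k)$---is sound and different from the paper (which simply asserts a bound of $(N!)^2$). However, the quantitative step contains an error. You write that it ``suffices that the $t_{jk}$ take \dots\ $\Omega(N)$ distinct values'' and that this yields $\binom{\Omega(N)+N}{N}=2^{\Omega(N\log N)}$. That binomial is only $2^{\Theta(N)}$: for any constant $c$, $\binom{(c+1)N}{N}\le 2^{(c+1)N}$. Consequently your explicit choice $b_j=c_j=j$, which produces only $2N-1$ distinct thresholds, gives at most $\binom{3N-1}{N}=2^{O(N)}$ types---not enough.

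To rescue the argument you need $\Omega(N^{1+\varepsilon})$ distinct thresholds; the cleanest fix is to make all $N^2$ sums $b_j+c_k$ distinct, e.g.\ $b_j=j$ and $c_k=(N+1)k$, so that the threshold count is exactly $N^2$ and $\binom{N^2+N}{N}\ge (N+1)^N=2^{\Omega(N\log N)}$. (Your tentative suggestion $b_j=j$, $c_k=k^2$ does \emph{not} achieve distinctness: $4+1^2=1+2^2$.) With that correction the rest of your argument---realizability of every weakly increasing gap pattern by a strictly increasing real $A$, and recoverability of the pattern from $\chi$---goes through.
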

\begin{proof}
3SUM-types of size $N$ are in one-to-one correspondence with cells of the
arrangement of $N^3$ hyperplanes in $\mathbb{R}^{3N}$. The
number of such cells is $O(N^{9N})$ and is easily shown to be at least
${(N!)}^2$.
\end{proof}

In order to reach this lower bound, we can simply
encode the label of the cell of the arrangement in \(\Theta(N \log N)\) bits.
However, decoding the information
requires to construct the whole arrangement which takes \(N^{O(N)}\) time.
An alternative solution is to store a
vertex of the arrangement of hyperplanes \(a_i + b_j + c_k \in \{\,
-1, 0, 1\,\}\).
There exists such a vertex that has the same 3SUM-type as the input point, as shown in the proof of Lemma~\ref{lem:bitsize}.
To answer any query, either recompute the vertex from the basis then answer the query using arithmetic,
or use linear programming.
Hence we can build a data structure of $O(N\log N)$ bits such that triple queries can be answered in polynomial time.

Note that we do not exploit much of the 3SUM structure here. In particular, the
same essentially holds for $k$-SUM, and can also be generalized to a {\sc
Subset Sum} data structure of $O(N^2)$ bits, from which we can extract the sign
of the sum of any subset of numbers.

\section{Subquadratic space and constant query time}\label{s:sscqt}
Our encoding is inspired by  Gr{\o}nlund and Pettie's $\tilde{O}(N^{1.5})$
non-uniform algorithm for 3SUM~\cite{GP18}.
Our data structure stores three components, which we call the
\emph{differences}, the \emph{staircase} and the \emph{square neighbors}.

\begin{description}
\item[Differences.] Partition $A$ and $B$ into \emph{blocks} of $\sqrt{N}$ consecutive elements. Let $D$ be the set of all differences of the form $a_i-a_j$ and $b_k-b_\ell$ where the items come from the same block. There are $O(N^{1.5})$ such differences. Sort $D$ and store a table indicating for each difference in $D$ its rank among all differences in $D$. This takes $O(\log N)$ bits for each of the $O(N^{1.5})$ differences, for a total of $O(N^{1.5}\log N)$ bits.

\item[Staircase.] Look at the table $G$ formed by all sums of the form $a_i+b_j$, which is monotonic in its rows and columns due to $A$ and $B$ being sorted and view it as being partitioned into a grid $G'$ of size $\sqrt{N}\times \sqrt{N}$ where each \emph{square} of the grid is also of size $\sqrt{N}\times \sqrt{N}$.
For each element $c \in C$, for each $i\in[1,\sqrt{N}]$ we store the largest $j$ such that some elements of the square $G'[i,j]$ are  $< c$, denote this as $V[c,i]$.
We also store, for each $c \in C$, for each $j\in[1,\sqrt{N}]$ the smallest $i$ such that some elements of the square $G'[i,j]$ are  $\geq c$, denote this as $H[c,j]$.
We thus store, in $V$ and $H$, $\sqrt{N}$ values of size $O(\log N)$ for each of the $N$ elements of $C$, for a total space usage of $O(N^{1.5}\log N)$ bits.
We call this the \emph{staircase} as this implicitly classifies, for each $c \in C$, whether each square has elements larger than $c$, smaller than $c$, or some larger and some smaller; only $O(\sqrt{N})$ can be in the last case, which we refer to as the \emph{staircase} of $c$.
\item[Square neighbors.] For each element $c \in C$, for each of the $O(\sqrt{N})$ squares on the staircase, we store the location of the predecessor and successor of $c$ in the squares $G'[i,V[c,i]]$ and $G'[H[c,j],j]$, for $i,j \in [1,\sqrt{N}]$. This takes space $O(N^{1.5}\log N)$.
\end{description}

To execute a query $(a_i,b_j,c_k)$, only a constant number of lookups in the
tables stored are needed.
If $j<\sqrt{N} \cdot H[k,i]$, then we know $a_i+b_j>c_k$.
If $i>\sqrt{N} \cdot V[k,j]$, then we know $a_i+b_j<c_k$.
If neither of these is true, then the square $G'[\lceil i/\sqrt{N}
\rceil,\lceil j/\sqrt{N} \rceil]$ is on the staircase of $c_i$ and thus using
the square neighbors table we can determine the location of the predecessor and
successor of $c_k$ in this square; suppose they are at $G[s_i,s_j]$ and
$G[p_i,p_j]$ and thus $G[s_i,s_j]\leq c_k \leq G[p_i,p_j]$. One need only
determine how these two compare to $G[i,j]=a_i+b_j$ to answer the query. But
this can be done using the differences as follows: to compare $G[s_i,s_j]$ to
$G[i,j]$ this would be determining the sign of $(a_i+b_j)-(a_{s_i}+b_{s_j})$
which is equivalent to determining the result of comparing $a_i-a_{s_i}$ and
$b_j-b_{s_j}$, which since both are in the same square, these differences are
in $D$ and the comparison can be obtained by examining their stored ranks. By
doing this for the predecessor and successor we will determine the relationship
between $a_i+b_j$ and $c_k$.

\begin{figure}
\centering
\includegraphics[trim={4cm 10cm 3cm 2cm},clip,width=5.5in]{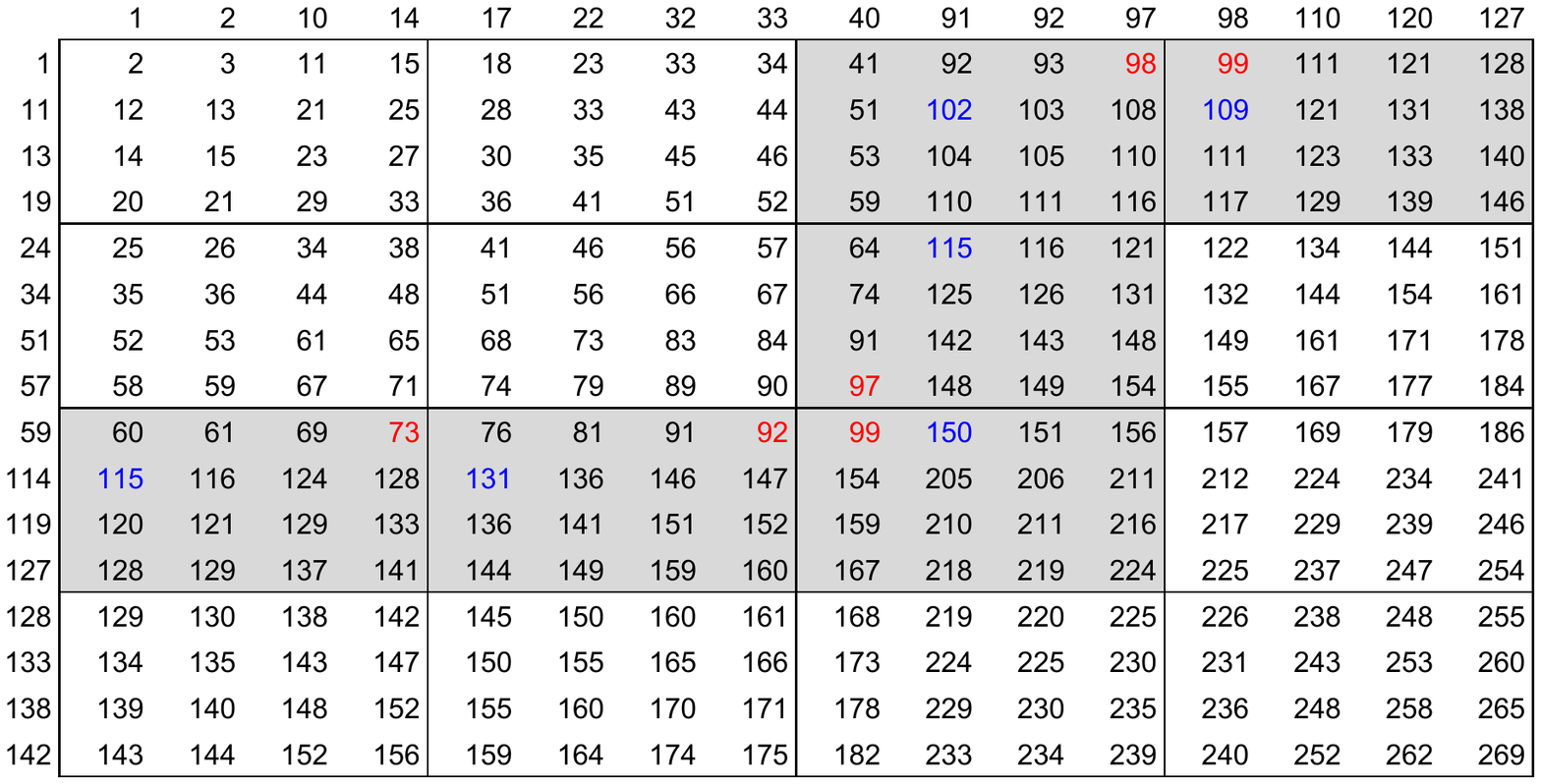}
\caption{Illustration of the staircase and square neighbors of the constant query time encoding. Here the $16\times16$ table is partitioned into a $4\times 4$ grid of squares of size $4\times 4$. If $c_k=100$, the grey illustrates the squares that form the staircase, containing values both larger and smaller than 100. Predecessors and successors within each staircase square are shown in red and blue.}
\end{figure}

\bibliography{paper}

\end{document}